\documentclass[11pt,reqno]{amsart}
\usepackage[T1]{fontenc}
\usepackage[utf8]{inputenc}
\usepackage{mathpazo}
\usepackage{amsmath,amssymb,amsthm,mathtools}
\usepackage[margin=1.05in]{geometry}
\usepackage{microtype}
\usepackage{xcolor}
\usepackage{graphicx,booktabs,flafter}
\usepackage[colorlinks=true,linkcolor=blue!45!black,citecolor=blue!45!black,urlcolor=blue!45!black]{hyperref}
\hypersetup{pdftitle={Boundary transmission and excursion geometry in the Lorentz mirror model: a numerical study},pdfauthor={}}
\numberwithin{equation}{section}
\newtheorem{theorem}{Theorem}[section]
\newtheorem{proposition}[theorem]{Proposition}
\newtheorem{lemma}[theorem]{Lemma}

\theoremstyle{definition}

\theoremstyle{remark}

\newcommand{\Z}{\mathbb Z}
\newcommand{\Pp}{\mathbb P_p}
\newcommand{\Ep}{\mathbb E_p}

\newcommand{\Cross}{\mathsf{Cross}}
\newcommand{\Col}{\mathsf{Col}}

\newcommand{\FirstAuthor}{Jian Gu}
\newcommand{\FirstAffiliation}{ESSEC}
\newcommand{\SecondAuthor}{Qin Hao}
\newcommand{\SecondAffiliation}{Polytechnic Institute of Paris}
\title[A numerical study of Lorentz mirror trajectories]{Boundary transmission and excursion geometry\\ in the Lorentz mirror model: a numerical study}
\author[\FirstAuthor]{\FirstAuthor}
\address{\FirstAffiliation}
\author[\SecondAuthor]{\SecondAuthor}
\address{\SecondAffiliation}
\makeatletter
\renewcommand{\@setauthors}{%
  \begingroup
  \trivlist
  \centering\footnotesize \@topsep30\p@\relax
  \advance\@topsep by -\baselineskip
  \item\relax
  \begin{minipage}[t]{.46\textwidth}\centering
    {\scshape\FirstAuthor\par}\smallskip
    {\normalfont\footnotesize\FirstAffiliation\par}
  \end{minipage}\hfill
  \begin{minipage}[t]{.46\textwidth}\centering
    {\scshape\SecondAuthor\par}\smallskip
    {\normalfont\footnotesize\SecondAffiliation\par}
  \end{minipage}
  \endtrivlist
  \endgroup
}
\renewcommand{\@setaddresses}{}
\makeatother
\subjclass[2020]{60K35, 82B41, 37A50}
\keywords{Lorentz mirror model, random environment, Monte Carlo simulation, boundary transmission, excursion geometry}

\newcommand{\WideNinety}{6149}
\newcommand{\WideNinetyLow}{5967}
\newcommand{\WideNinetyHigh}{6330}
\newcommand{\WideNinetyScaled}{12.01}
\newcommand{\WorstUpper}{0.0520}
\newcommand{\CylinderCensored}{0}
\begin{document}
\raggedbottom
\begin{abstract}
We study boundary transmission and excursion geometry in the planar Lorentz mirror model through finite-volume numerical experiments. Cylinder penetration quantiles grow approximately linearly with circumference over the simulated range, with a systematic upward drift in the normalized quantiles when vacancies are present. Estimated slab-transmission probabilities yield exponential confinement bounds with quantified statistical confidence at the tested widths. Planar excursions conditioned to reach a distant boundary exhibit transverse ranges proportional to the exit radius, while their repeated-visit fractions vary strongly with mirror density. Endpoint-label collision probabilities substantially exceed a general gluing lower bound, although they account for only part of the observed same-column crossing probability. Finally, rerouting at repeated vertices preserves the exit while substantially shortening the sampled paths. These findings characterize finite-volume transport and excursion geometry without determining an asymptotic scaling exponent or resolving infinite-plane localization.
\end{abstract}
\maketitle

\section{Introduction}\label{sec:intro}

The Lorentz mirror model combines a random environment with a completely
deterministic trajectory. Once each scatterer has been chosen, a ray
follows its local routing rule on every subsequent visit. This distinction
from a random walk is essential: repeated visits do not provide new
randomness. The model was introduced as a deterministic lattice gas by
Ruijgrok and Cohen~\cite{RuijgrokCohen}. Numerical work by Ziff, Kong, and
Cohen~\cite{ZiffKongCohen} connected the resulting paths with percolation
hulls and kinetic growth walks, and Bunimovich and
Troubetzkoy~\cite{BunimovichTroubetzkoy} studied recurrence for several
lattice-gas rules.

This paper asks three quantitative questions. How long is the penetration
scale of a finite-width cylinder? What geometry is typical of a planar
excursion conditioned to reach a distant boundary? How much probability
is retained by an endpoint-collision construction used to join two
crossings? We address these questions through reproducible finite-volume
simulations, and use elementary routing identities to interpret the data.
The numerical observables, their uncertainty, and their limitations are
the principal results.

Our computations connect several quantities that are often discussed
separately. Cylinder transmission gives a statistical confinement
envelope that remains valid beyond the simulated length. Repeated
arrivals are simultaneously a trajectory statistic and the cycle rank
of a planar trace. Endpoint-label collisions measure the loss in a
particular crossing witness. Finally, a resampling experiment separates
the geometric complexity of a trace from the length of its actual
source-to-exit component. These connections are more informative here
than fitting a single transport exponent.

\subsection{Relation to earlier work}

Kozma and Sidoravicius~\cite{KS} established a lower bound of order
\(R^{-1}\) for Lorentz escape probabilities. Kraemer and
Sanders~\cite{KraemerSanders} studied the density of paths reaching finite
boundaries using an argument with numerical inputs. On even cylinders,
Li~\cite{LiCylinder} obtained a polynomial localization scale
\(O(n^{10})\) for circumference \(2n\). Ryan~\cite{Ryan} obtained a
different estimate in the dilute regime \(p\le C/n\), with scale
\(p^{-2}\). Our density parameters are fixed, and our observed
penetration quantiles are finite-width statistics. They are neither
estimates of the unspecified constants in Li's upper bound nor a test
of Ryan's dilute asymptotic regime.

There is also an extensive physical literature on loop models with
crossings. Martins, Nienhuis, and Rietman~\cite{MartinsNienhuisRietman}
studied an integrable intersecting-loop family; Jacobsen, Read, and
Saleur~\cite{JacobsenReadSaleur} developed its Goldstone-phase
interpretation. Nahum, Serna, Somoza, and Ortu\~no~\cite{NahumEtAl}
combined field theory with simulations reaching linear sizes of order
\(10^6\), emphasizing logarithmic corrections. Their results already
show why a modest-range power fit is an unreliable way to identify
asymptotic behavior. Our considerably smaller simulations concern
boundary penetration, conditional excursion geometry, and the efficiency
of specified crossing witnesses. We do not present them as a new
determination of Goldstone-phase exponents.

In finite volume, the independent Lorentz law has loop weight one:
forming an additional closed component does not reweight an environment.
This differs from the loop-weight-two model related to the XXZ chain
by Ryan~\cite{RyanXXZ}. The hierarchical model of Lefevere and
Tasaki~\cite{LefevereTasaki} also changes the geometry and admits a
different, recursive transport analysis. These models provide useful
context but no probability input for the computations below.

The Manhattan pinball model restricts the mirror orientation using the
vertex parity. Its relation to quantum networks was developed by
Beamond, Cardy, and Chalker~\cite{BeamondCardyChalker} and by Beamond,
Owczarek, and Cardy~\cite{BeamondOwczarekCardy}. Li~\cite{LiManhattan}
proved exponential planar confinement for densities extending below
\(1/2\). Our simulations use the symmetric Lorentz law throughout;
the numerical values below are not Manhattan-model data.
These questions also belong to the broader study of localization in
random media initiated by Anderson \cite{Anderson} and developed in the
scaling theory of Abrahams, Anderson, Licciardello, and Ramakrishnan
\cite{AALR}. The Schr\"odinger results concern spectral
localization, whereas the estimates below concern paths in a local
pairing system (recent work on Schr\"odinger model for example \cite{DingSmart,LiBernoulli2D,LiZhangBernoulli3D}).
\subsection{Main numerical findings and organization}

Section~\ref{sec:methods} specifies the model, sampling, and uncertainty
calculations. Section~\ref{sec:cyl} reports cylinder penetration and the
resulting confidence envelopes. Section~\ref{sec:plane} reports planar
escape and conditional geometry. Section~\ref{sec:endpoints} measures
endpoint-collision losses, and Section~\ref{sec:intervention} reports
the resampling experiment. Section~\ref{sec:discussion} assesses what
these finite-size observations establish. The proofs needed to connect
the measurements are collected in Appendix~\ref{app:proofs}, followed
by reproduction details. In particular, the exposition does not require
the reader to follow a localization proof before seeing the data.

\section{Model, observables, and numerical protocol}\label{sec:methods}

\subsection{Local law and trajectory convention}

At each vertex, independently, the four physical half-edges
\(N,E,S,W\) are paired according to
\begin{equation}\label{eq:states}
 \begin{array}{c|c|c}
  \text{vacancy}&\text{first mirror}&\text{second mirror}\\
  (N,S)(E,W)&(N,W)(S,E)&(N,E)(S,W)\\
  1-p&p/2&p/2
 \end{array}
\end{equation}
The ray follows the matching on each arrival. The computations use
\(p\in\{1/5,2/5,2/3,9/10,1\}\); \(p=1\) is a fully occupied control.
All queried site states are stored and reused. There is no resampling
on a repeated visit in the baseline simulations.

On the plane the incoming edge is \(((-1,0),(0,0))\), with
\(X_0=(0,0)\); the origin has an unconditioned random state. Set
\begin{equation}\label{eq:planarobs}
 \tau_R=\inf\{t:\|X_t\|_\infty\ge R\},\qquad
 E_R=\{\tau_R<\infty\},\qquad
 N_R=|\{X_0,\ldots,X_{\tau_R}\}|.
\end{equation}
On \(E_R\), define
\begin{equation}\label{eq:geometryobs}
 K_R=\tau_R+1-N_R,\qquad
 D_R=\min_{i=1,2}\left(\max_{t\le\tau_R}X_t^{(i)}
                  -\min_{t\le\tau_R}X_t^{(i)}\right),\qquad
 \rho_R=\frac{K_R}{\tau_R+1}.
\end{equation}
Thus \(D_R\) measures the narrower coordinate range of the complete
excursion, not the transverse coordinate of its endpoint. The
conditional mean of \(\rho_R\) is an average of pathwise fractions,
not a ratio of two ensemble averages.

For a cylinder of circumference \(W\), let
\(S_{W,L}=\{1,\ldots,L\}\times(\Z/W\Z)\), with open left and right
boundaries and periodic transverse coordinate. Write
\begin{equation}\label{eq:A}
 A_W(L)=\Pp(\text{some left boundary port is paired to a right port in }S_{W,L}).
\end{equation}
The word ``some'' is part of the observable: \(A_W(L)\) is not the
transmission probability of one uniformly chosen incoming port.
In a half-cylinder, let \(H_W\) be the largest axial coordinate reached
by any trajectory entering from its left boundary. Cutting at the
first crossing of a given section gives
\begin{equation}\label{eq:H}
 A_W(L)=\Pp(H_W>L),\qquad
 \xi_q(W)=\inf\{L:\Pp(H_W\le L)\ge q\}.
\end{equation}
We call \(\xi_q\) a penetration quantile, to distinguish it from an
asymptotic exponential decay length.

\subsection{Experiment sizes and stopping rules}

\begin{table}[tb]
\centering\small
\caption{Sampling design. Each row counts independent environments;
different radii or lengths within one environment are correlated.}
\label{tab:design}
\begin{tabular}{@{}p{.19\textwidth}p{.42\textwidth}r@{}}
\toprule
Experiment & Densities and sizes & Environments\\
\midrule
Cylinder grid & \(p=2/5,2/3,1\); \(W=8,16,32,64,128\);
20,000 per pair & 300,000\\
Cylinder extension & \(p=2/5\); \(W=256,512\);
5,000 per width & 10,000\\
Plane & Five densities; 20,000 per density;
\(R=16,32,64,128,256,512\) & 100,000\\
Endpoint labels & \(p=2/5,2/3,1\); \(m=4h\),
\(h=4,8,16,32\); 50,000 per pair & 600,000\\
\midrule
Total & & 1,010,000\\
\bottomrule
\end{tabular}
\end{table}

Table~\ref{tab:design} gives the complete design. Cylinder explorations
trace unmatched left ports until all return to the left, or until one
reaches the right boundary at \(64W+1\). A returning component pairs
two left ports, so the second need not be traced again. This produces
\(\min(H_W,64W+1)\) and a censoring flag. All transmission indicators
for \(L\le64W\) are determined exactly by this observation, even if
the maximum is censored. In the reported data the total number of
censored cylinder samples is \(\CylinderCensored\).

The planar exploration records each nested exit radius on its first
hit and terminates at \(R=512\) or on return to its initial incoming
directed state. In the latter case the trajectory is periodic and
cannot later reach a new radius. A trajectory that has not exited
a finite box can visit only finitely many directed states; the code
checks this deterministic bound. No time limit is used to classify
a trajectory as trapped. Site arrays use generation stamps to avoid
clearing an entire box between independent environments.

Rational state probabilities are sampled using integer rejection from
\texttt{mt19937\_64}. For \(p=a/b\), an integer uniform on
\(\{0,\ldots,2b-1\}\) is assigned to the three states in groups of
sizes \(2(b-a),a,a\). Thus there is no floating-point threshold
approximation to \eqref{eq:states}. A distinct recorded seed identifies
each parameter block. The seed formulas and commands are given in
Appendix~\ref{app:reproduce}.

\subsection{Uncertainty and verification}

Cylinder and planar escape plots show pointwise 95\% Wilson intervals. Quantile
intervals use binomial order-statistic bounds. Means of conditional
geometry observables use their empirical standard errors over escaping
environments. Slope intervals for cylinder quantiles use 500 bootstrap
replicates, resampling complete environments through their histograms.
These are Monte Carlo uncertainties; they do not include finite-size
bias. Points obtained from different radii of the same trajectory are
not treated as independent measurements.

For the confinement envelopes we instead use one-sided
Clopper--Pearson upper limits with error allocation \(0.01/17\) to
each density--width pair. The fixed probe is \(L=16W\) in every pair.
This gives a simultaneous 99\% confidence statement over the 17
probabilities, including the cylinder extension. It is a statistical
statement about random sampling, not an exact enumeration of their
probabilities.

An independent reference implementation constructs connected components
of a graph of physical edges, rather than following ray-update formulas.
It agrees with the production routing code on all 19,683 configurations
of a \(3\times3\) box and all three bottom ports, giving 59,049
comparisons. A further 3,000 odd-cylinder samples check the parity
identity for transmitting ports. The raw planar records satisfy
\(N_R+K_R=\tau_R+1\), and an independently reconstructed example trace
has cycle rank \(K_R\). These checks test the boundary conventions and
state reuse that are most consequential for the experiment.

\section{Cylinder penetration and confinement envelopes}\label{sec:cyl}

\subsection{Transmission curves and penetration quantiles}

Figure~\ref{fig:cyl-tail} shows the transmission probability against
\(L/W\). At a given circumference, transmission decreases strongly
with length. Vacancies substantially increase the penetration scale:
for \(W=128\), the 90\% penetration quantiles at
\(p=2/5,2/3,1\) are respectively \(1479,805,234\).
The event at the left of \eqref{eq:A} includes all incoming ports,
so this difference is not a consequence of choosing a favorable
single initial height.

\begin{figure}[tb]
\centering\includegraphics[width=\textwidth]{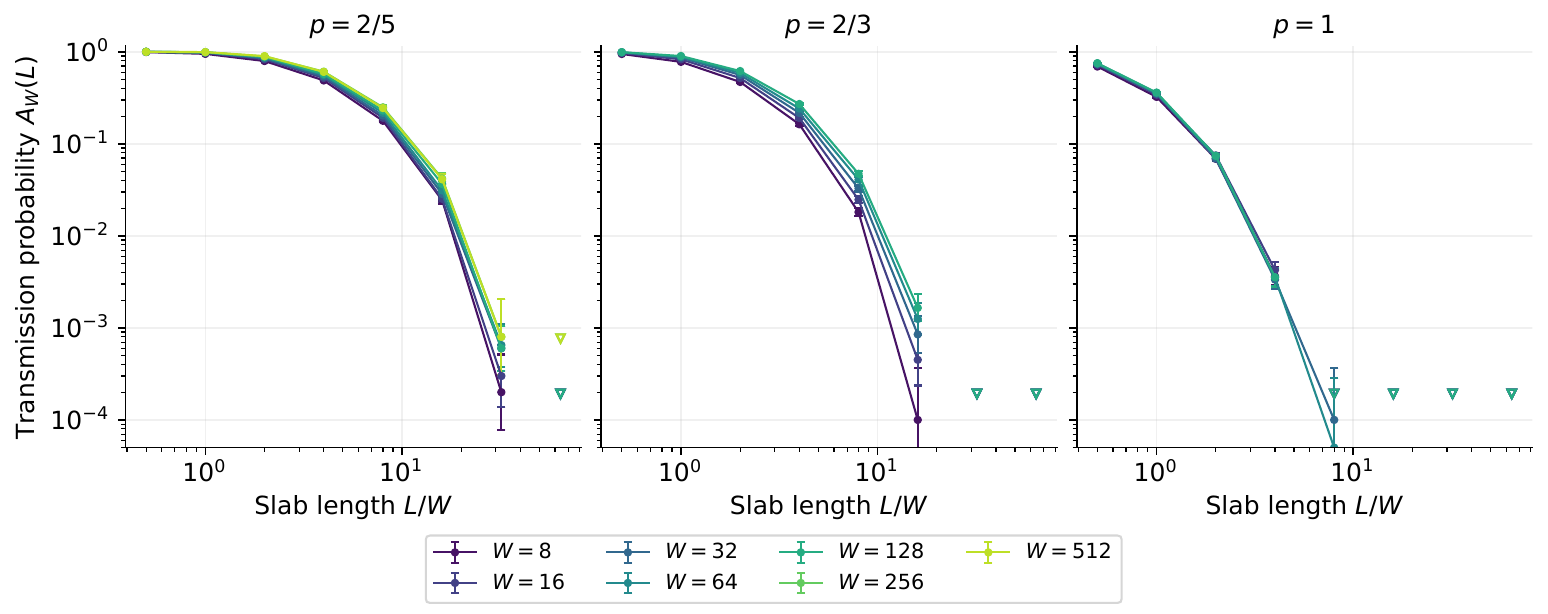}
\caption{Cylinder transmission. Circles show nonzero Monte Carlo
estimates with pointwise 95\% Wilson intervals; open downward triangles
mark upper interval endpoints for zero observed successes. No zero
count is interpreted as zero probability. Curves connect measured
lengths and are not fitted tail laws.}
\label{fig:cyl-tail}
\end{figure}

Table~\ref{tab:cyl} reports the measured quantiles and the counts used
for the confidence envelope. The wider \(p=2/5\) extension gives
\(\xi_{.9}(512)=\WideNinety\), with 95\% interval
\([\WideNinetyLow,\WideNinetyHigh]\). Its normalized value is
\(\WideNinetyScaled\), compared with \(83/8=10.375\) at \(W=8\).

\begin{table}[tb]
\centering\small
\caption{Cylinder penetration quantiles and the fixed probe \(L=16W\).
The main grid uses 20,000 environments per row; the two rows with
\(W>128\) use 5,000. Brackets give pointwise 95\% quantile intervals.
\(k_L\) is the number transmitting at \(16W\), and \(u\) is the
one-sided upper limit with simultaneous 99\% coverage across all rows,
rounded upward to five decimal places.}
\label{tab:cyl}
\begin{tabular}{@{}rrrcrr@{}}
\toprule
\(p\) & \(W\) & \(\xi_{.5}\) & \(\xi_{.9}\) [95\% CI] & \(k_L\) & \(u\)\\
\midrule
$2/5$ & 8 & 32 & 83 [82, 84] & 489 & 0.02820 \\
$2/5$ & 16 & 67 & 170 [168, 173] & 506 & 0.02911 \\
$2/5$ & 32 & 138 & 347 [341, 353] & 571 & 0.03258 \\
$2/5$ & 64 & 289 & 721 [711, 732] & 624 & 0.03539 \\
$2/5$ & 128 & 593 & 1479 [1454, 1504] & 733 & 0.04116 \\
$2/5$ & 256 & 1251 & 3072 [2970, 3167] & 210 & 0.05200 \\
$2/5$ & 512 & 2511 & 6149 [5967, 6330] & 206 & 0.05112 \\
$2/3$ & 8 & 16 & 40 [39, 40] & 2 & 0.00060 \\
$2/3$ & 16 & 34 & 85 [84, 87] & 9 & 0.00118 \\
$2/3$ & 32 & 72 & 181 [178, 184] & 17 & 0.00175 \\
$2/3$ & 64 & 152 & 382 [376, 388] & 25 & 0.00229 \\
$2/3$ & 128 & 324 & 805 [791, 816] & 33 & 0.00281 \\
$1$ & 8 & 6 & 15 [14, 15] & 0 & 0.00038 \\
$1$ & 16 & 12 & 29 [29, 30] & 0 & 0.00038 \\
$1$ & 32 & 25 & 57 [56, 58] & 0 & 0.00038 \\
$1$ & 64 & 50 & 116 [114, 118] & 0 & 0.00038 \\
$1$ & 128 & 101 & 234 [230, 237] & 0 & 0.00038 \\
\bottomrule
\end{tabular}
\end{table}

The normalization in Figure~\ref{fig:cyl-scale} makes finite-size drift
visible. The fully occupied control is close to a constant after
division by \(W\), while the two vacancy densities show an increase.
Division by \(W\log W\) produces a downward drift over the measured
range. Consequently neither plot establishes an asymptotic scaling
law: a slowly varying correction, including a logarithm with a
non-negligible additive constant, remains unresolved.

\begin{figure}[tb]
\centering\includegraphics[width=.94\textwidth]{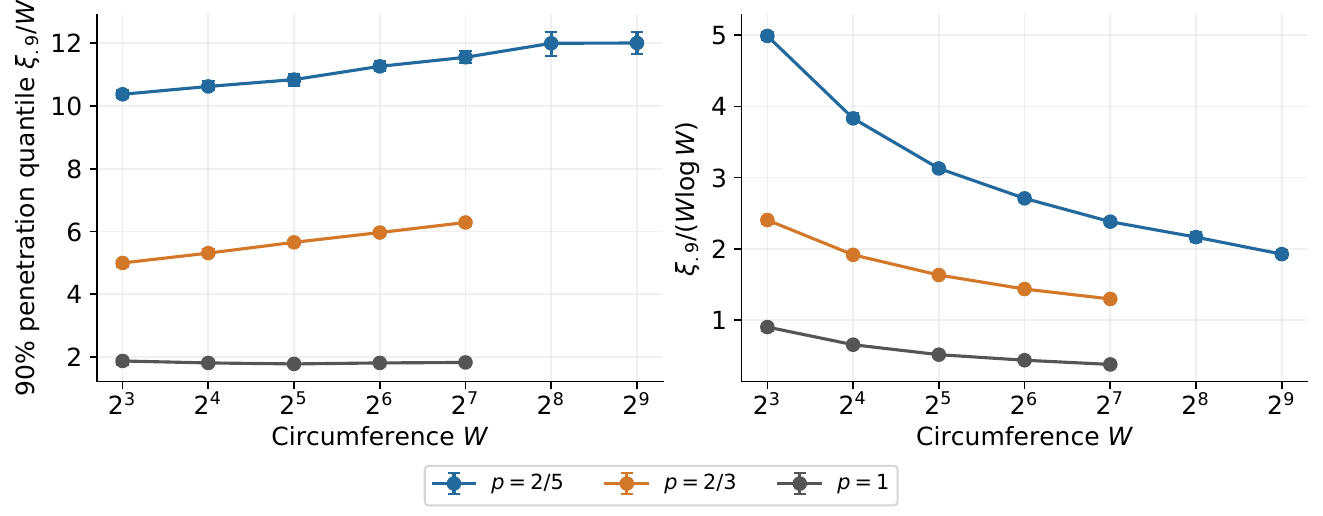}
\caption{Two normalizations of the 90\% penetration quantile. Error
bars are binomial order-statistic 95\% intervals. The drift under both
normalizations is retained rather than removed by fitting a collapse.}
\label{fig:cyl-scale}
\end{figure}

For a compact description of the data, Table~\ref{tab:fits} gives
unweighted least-squares slopes of \(\log\xi_{.9}\) against
\(\log W\). Changing the lower end of the window is a finite-size
sensitivity check. The bootstrap intervals describe sampling error
conditional on that fitting rule, not confidence intervals for a
universal exponent. In particular, a slope near one over these
widths does not prove a linear localization length.

\begin{table}[tb]
\centering\small
\caption{Descriptive power slopes for \(\xi_{.9}\). The last column
is a 95\% percentile interval from 500 environment-level bootstrap
replicates. Different windows for the same density reuse data.}
\label{tab:fits}
\begin{tabular}{@{}rrrr@{}}
\toprule
\(p\) & Width window & Slope & Bootstrap interval\\
\midrule
$2/5$ & 8--512 & 1.038 & [1.032, 1.045] \\
$2/5$ & 32--512 & 1.039 & [1.027, 1.049] \\
$2/5$ & 128--512 & 1.028 & [1.005, 1.051] \\
$2/3$ & 8--128 & 1.083 & [1.077, 1.093] \\
$2/3$ & 16--128 & 1.081 & [1.069, 1.092] \\
$2/3$ & 32--128 & 1.076 & [1.059, 1.091] \\
$1$ & 8--128 & 0.993 & [0.984, 1.013] \\
$1$ & 16--128 & 1.006 & [0.985, 1.013] \\
$1$ & 32--128 & 1.019 & [1.000, 1.034] \\
\bottomrule
\end{tabular}
\end{table}

\subsection{From a finite measurement to a distance-uniform envelope}

Let \(B_W(M)\) be the event that all trajectories through the section
\(x=0\), with all incoming directions, stay in \([-M,M]\times(\Z/W\Z)\).
The elementary independent-slab bound proved in
Appendix~\ref{app:slab} is
\begin{equation}\label{eq:envelope}
 \Pp(B_W(kL)^c)\le 2 A_W(L)^k,\qquad k=1,2,\ldots.
\end{equation}
Thus each measured upper limit \(u\) in Table~\ref{tab:cyl} supplies
the envelope \(\min(1,2u^k)\) at every distance \(16kW\).
With simultaneous 99\% Monte Carlo confidence, all 17 resulting
envelopes hold for all positive integers \(k\). There is no additional
multiple-comparison cost over \(k\): once the one-slab inequalities
hold, \eqref{eq:envelope} is deterministic in those probabilities.

The largest upper limit in the table is \(\WorstUpper\), well below
one. This gives useful quantitative information for each tested
width. It does not justify interpolation to an untested width or
passage to \(W\to\infty\). Likewise, zero transmitting samples at
\(p=1,L=16W\) yield a positive upper confidence limit, as shown
explicitly in the table.

This finite-width conclusion concerns a stronger initial condition
than a single selected ray, but it does not resolve the planar
problem. The rigorous polynomial cylinder estimate of
\cite{LiCylinder} and the numerical envelopes answer different
questions: one controls all sufficiently large widths through a
theorem, while the other gives evaluated constants at a finite
list of widths.

\section{Planar escape and conditional excursion geometry}\label{sec:plane}

\subsection{Escape across nested scales}

The plane experiments use fresh independent environments for each
trajectory, recording six nested exit events in that environment.
Figure~\ref{fig:escape} shows both the escape probabilities and the
local doubling slope
\begin{equation}\label{eq:beta}
 \beta_{mathrm{eff}}(R)=
 -\frac{\log[\Pp(E_{2R})/\Pp(E_R)]}{\log 2}.
\end{equation}
To respect the nesting, its uncertainty is obtained from the Wilson
interval for \(\Pp(E_{2R}\mid E_R)\), then transformed by
\(-\log_2\). Treating the two escape estimates as independent would
misrepresent this uncertainty.

\begin{figure}[tb]
\centering\includegraphics[width=.94\textwidth]{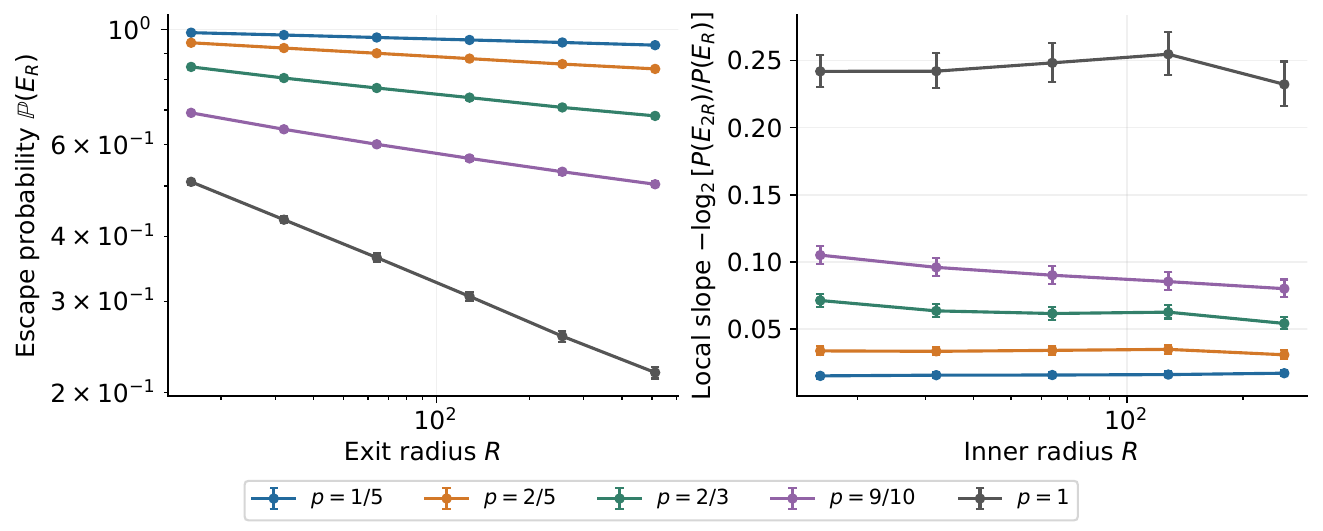}
\caption{Planar escape probabilities and local doubling slopes.
Every density uses 20,000 independent environments; the nested
radii within a density use the same trajectories. Error bars in
the right panel account for this nesting through conditional
binomial intervals. The curves are finite-size diagnostics.}
\label{fig:escape}
\end{figure}

At \(R=512\), the observed escape probabilities range from
\(0.93305\) at \(p=1/5\) to \(0.21875\) at \(p=1\)
(Table~\ref{tab:plane}). The persistence of a substantial escape
probability over this range is compatible with eventual periodicity:
it is not evidence of an atom at infinite trajectory length.
The local slopes avoid imposing a power law on the full dataset,
but they also cannot distinguish a slowly decaying function from
a nonzero limit at the available radii.

\begin{table}[tb]
\centering\small
\caption{Planar results at \(R=512\). Each density has 20,000 trials.
\(n_E\) is the number reaching the boundary; all geometry columns
condition on this event. Brackets are 95\% Wilson intervals for
escape. Full uncertainty intervals for the geometry and medians
are provided in the machine-readable summary.}
\label{tab:plane}
\begin{tabular}{@{}rrcrrr@{}}
\toprule
\(p\) & \(n_E\) & \(\widehat{\Pp}(E_R)\) [95\% CI] &
\(\overline{D_R/R}\) & \(\overline{\rho_R}\) & Median \(\tau_R\)\\
\midrule
$1/5$ & 18661 & 0.9331 [0.9295, 0.9364] & 0.923 & 0.106 & 24778 \\
$2/5$ & 16797 & 0.8398 [0.8347, 0.8449] & 0.915 & 0.218 & 45227 \\
$2/3$ & 13640 & 0.6820 [0.6755, 0.6884] & 0.908 & 0.320 & 63720 \\
$9/10$ & 10075 & 0.5038 [0.4968, 0.5107] & 0.893 & 0.356 & 68877 \\
$1$ & 4375 & 0.2188 [0.2131, 0.2245] & 0.871 & 0.331 & 62760 \\
\bottomrule
\end{tabular}
\end{table}

\subsection{Transverse spread and repeated visits}

Conditioned on reaching \(R=512\), the mean of \(D_R/R\) lies
between \(0.871\) and \(0.923\) across the five densities.
Figure~\ref{fig:geometry} shows that a macroscopic transverse range
is already visible over the measured scales. These observations are
much broader than what a small-power lower bound on width would
require. They support studying the full distribution of \(D_R/R\),
while leaving its limiting law open.

\begin{figure}[tb]
\centering\includegraphics[width=.94\textwidth]{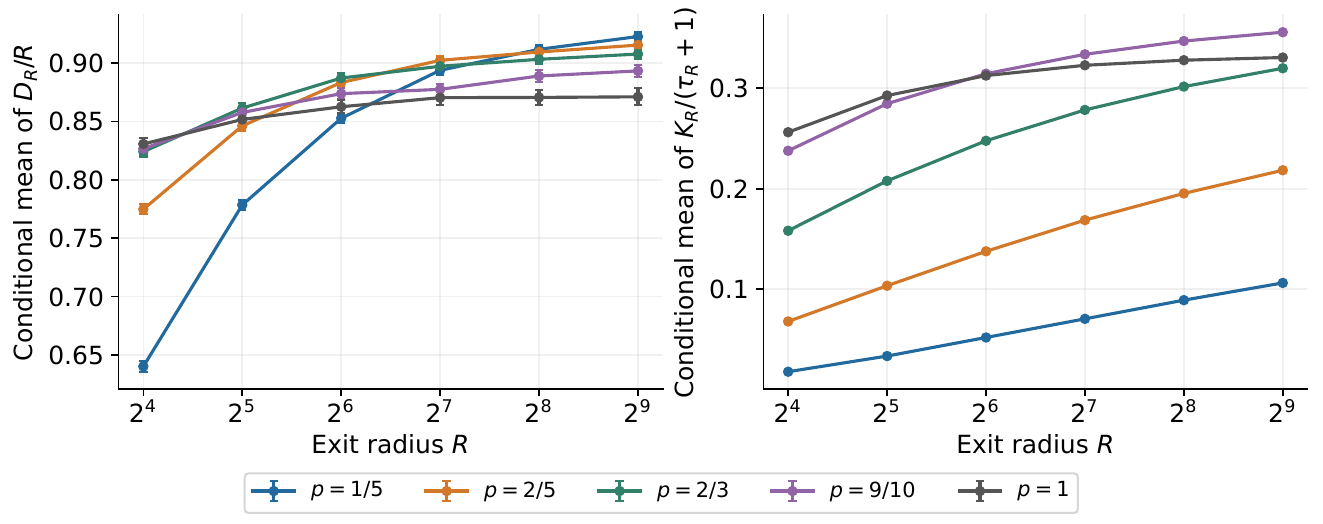}
\caption{Conditional geometry of escaping planar excursions.
Left: mean narrower coordinate range divided by exit radius.
Right: mean pathwise repeated-arrival fraction. Error bars are
95\% intervals from the empirical standard errors among escaping
environments at each radius.}
\label{fig:geometry}
\end{figure}

Repeated visits show a different dependence on density. At
\(R=512,p=2/5\), the mean of \(\rho_R\) is \(0.21838\),
with 95\% interval approximately \([0.21796,0.21879]\).
The corresponding mean \(K_R/R\) is \(24.59\), whereas the
elementary repeated-arrival bound in Appendix~\ref{app:fresh}
has coefficient \(p^3/96=1/1500\). The large gap is useful
information about that estimate's conservatism; it does not
by itself improve the probability inequality.

The repeated fraction is not monotone over the five tested densities:
it is about \(0.356\) at \(p=9/10\) and \(0.331\) at \(p=1\).
Thus a simple monotone dependence on the mirror density would not
describe this conditional observable at \(R=512\). The conditioning
and the presence of crossing passages both matter. We do not
infer a parameter-monotonicity theorem from this finite comparison.

For each escaping excursion, cut its initial incoming edge at its
midpoint and include the resulting source half-edge in the physical
trace \(G_R\). Appendix~\ref{app:trace} proves the exact identity
\begin{equation}\label{eq:cycle-rank}
 \beta_1(G_R)=K_R.
\end{equation}
The same counts therefore measure trace cycle rank. This identity
does not count vertex-disjoint short cycles or closed dynamical
orbits. No such interpretation is attached to the numerical
values of \(K_R\).

\section{Endpoint collisions and the cost of a crossing witness}\label{sec:endpoints}

\subsection{A measurable refinement of a generic bound}

Consider the planar rectangle
\(Q_{m,h}=\{1,\ldots,m\}\times\{1,\ldots,h\}\), open on all
four sides. Let \(\Cross(m,h)\) mean that some bottom boundary
port connects to the top, and let \(\Col(m,h)\) mean that such
a connection has the same entry and exit column. On a crossing
configuration, select the smallest successful bottom column
\(i\); its top exit \(j\) is unique. Write
\[
 q=\Pp(\Cross(m,h)),\qquad
 r_{ij}=\Pp(\text{selected label is }(i,j)),\qquad
 S=\sum_{i,j}r_{ij}^2.
\]
The selection rule is fixed before inspecting the sample.
Reflection of an independent second rectangle gives
\begin{equation}\label{eq:glue-main}
 \Pp(\Col(m,2h))\ge S\ge q^2/m^2.
\end{equation}
Appendix~\ref{app:collision} proves the inclusions and the inequality.
We measure both losses in this chain, rather than treating its
rightmost term as an estimate of the actual same-column probability.

Given \(N\) sampled environments, let \(c_{ij}\) count selected
labels, excluding noncrossing configurations. The collision
estimator is
\begin{equation}\label{eq:collision-estimator}
 \widehat S=\frac{\sum_{i,j}c_{ij}(c_{ij}-1)}{N(N-1)},\qquad
 \widehat G=\frac{m^2\widehat S}{\widehat q^2}.
\end{equation}
The first estimator is unbiased: it counts matching labels among
distinct sampled environments. Squaring empirical frequencies
would add a diagonal term of order \(1/N\), which is appreciable
at the larger sizes. The ratio \(\widehat G\) is a plug-in
estimate of the collision gain; it is not claimed to be unbiased.
Its intervals use the joint influence of \(\widehat S\) and
\(\widehat q\) from the same environments.

\subsection{Observed collision gains}

Table~\ref{tab:endpoint} and Figure~\ref{fig:endpoint} show the
results for \(m=4h\). The measured gains range from about
\(4.7\) to \(12.3\), so the uniform distribution on \(m^2\)
labels is a poor quantitative description of this selector.
This gain depends on the selection rule and rectangle geometry.
It does not imply the same gain for a different witness, a
different aspect ratio, or a conditional boundary law.

\begin{table}[tb]
\centering\small
\caption{Endpoint statistics for 50,000 environments per row,
with \(h=m/4\). Brackets give approximate pointwise 95\%
intervals for the collision gain. The last column is the
separately measured event in the full \(m\times2h\) rectangle.}
\label{tab:endpoint}
\begin{tabular}{@{}rrrccr@{}}
\toprule
\(p\) & \(m\) & \(\widehat q\) & \(\widehat S\) &
\(\widehat G\) [95\% CI] & \(\widehat{\Pp}(\Col)\)\\
\midrule
$2/5$ & 16 & 0.9999 & 0.047810 & 12.24 [12.03, 12.45] & 0.40602 \\
$2/5$ & 32 & 1.0000 & 0.009209 & 9.43 [9.31, 9.55] & 0.19900 \\
$2/5$ & 64 & 1.0000 & 0.002348 & 9.62 [9.51, 9.72] & 0.11236 \\
$2/5$ & 128 & 1.0000 & 0.000651 & 10.67 [10.55, 10.79] & 0.06310 \\
$2/3$ & 16 & 0.9963 & 0.018215 & 4.70 [4.65, 4.75] & 0.18464 \\
$2/3$ & 32 & 0.9985 & 0.005477 & 5.63 [5.56, 5.69] & 0.11368 \\
$2/3$ & 64 & 0.9993 & 0.001573 & 6.45 [6.38, 6.52] & 0.06536 \\
$2/3$ & 128 & 0.9997 & 0.000446 & 7.31 [7.23, 7.40] & 0.03642 \\
$1$ & 16 & 0.9381 & 0.020607 & 5.99 [5.93, 6.06] & 0.17168 \\
$1$ & 32 & 0.9476 & 0.006117 & 6.98 [6.89, 7.06] & 0.10174 \\
$1$ & 64 & 0.9518 & 0.001635 & 7.39 [7.30, 7.48] & 0.05502 \\
$1$ & 128 & 0.9544 & 0.000429 & 7.72 [7.63, 7.82] & 0.02762 \\
\bottomrule
\end{tabular}
\end{table}

\begin{figure}[tb]
\centering\includegraphics[width=.94\textwidth]{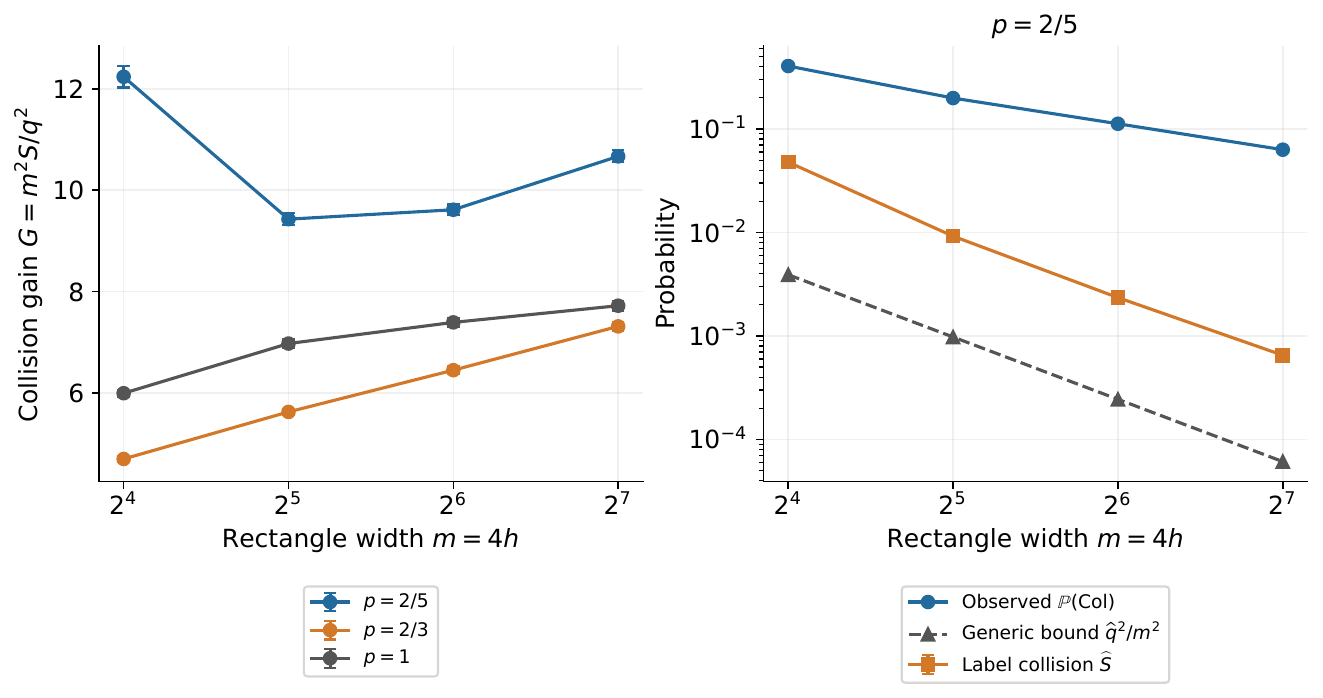}
\caption{Left: gain over the uniform-label Cauchy--Schwarz bound.
Right: the three levels of \eqref{eq:glue-main} for \(p=2/5\).
Collision error bars use delete-one jackknife standard errors;
gain intervals also account for estimation of \(q\). The top
curve is an event probability, while the two lower curves are
witness-based lower bounds on it.}
\label{fig:endpoint}
\end{figure}

For example, at \(p=2/5,m=128,h=32\),
\[
 \widehat{\Pp}(\Col)=0.06310,\qquad
 \widehat S=0.0006513,\qquad
 \widehat q^2/m^2=0.00006104.
\]
The measured collision gain is \(10.67\), but the actual
same-column event is still about 97 times as frequent as the
selected matching-label witness. This separates two sources
of loss: nonuniform endpoint probabilities account for roughly
one order of magnitude, while restricting attention to one
chosen crossing in each rectangle discards substantially more.
The latter comparison points toward richer collections of
crossing witnesses as a computational target, without asserting
that their overlaps can already be controlled analytically.

The aspect ratio here is four. In particular, these measurements
are not substituted into a theorem requiring rectangles with
aspect ratio 100. Also, \(\widehat q=1\) in a row means that
every sampled rectangle crossed, not that its crossing
probability is exactly one.

\section{Resampling repeated vertices}\label{sec:intervention}

Graph cycles can be abundant even though the actual escaping
trajectory contains no closed dynamical component. To measure
this distinction, we take the first 1,000 independent
\(p=2/5\) trajectories in the planar experiment that reach
\(R=64\). In each stopped trace, all vertices visited twice
are simultaneously assigned new independent states from
\eqref{eq:states}; all other states are retained. The resampling
uses a random stream separate from the baseline exploration.
Although a longer original trajectory was recorded for the
planar study, the intervention set uses only visits before
the first exit at \(R=64\).

\begin{figure}[tb]
\centering\includegraphics[width=.94\textwidth]{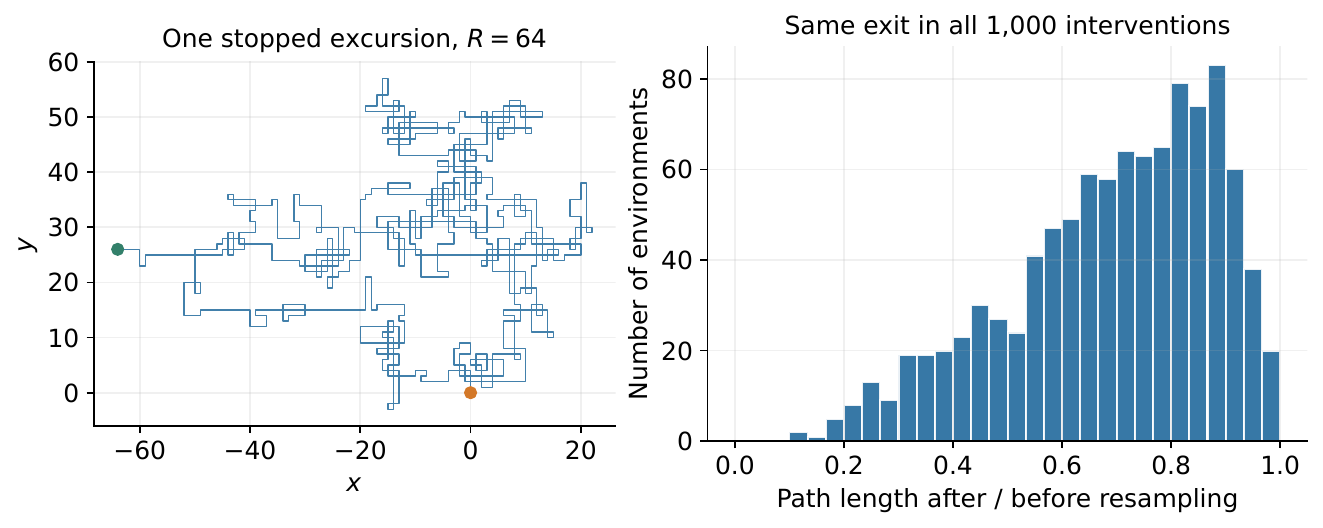}
\caption{An escaping trace and the resampling intervention.
Left: the first accepted \(p=2/5,R=64\) trace, shown only as
an illustration, with source and exit marked. Right: path-length
retention across 1,000 independently sampled escaping
environments, one resampling per trace.}
\label{fig:rerouting}
\end{figure}

The exit vertex is unchanged in all 1,000 interventions.
This is predicted exactly by the two-terminal pairing argument
in Appendix~\ref{app:resampling}; it is also checked by the code.
The nontrivial numerical statistic is the amount of trace
retained in the new source component. The mean ratio of new
to original path length is \(0.6914\), with approximate
95\% interval \([0.6796,0.7032]\). Its median is \(0.7246\),
and its empirical 10th and 90th percentiles are \(0.4045\)
and \(0.9074\). Of the 1,000 paths, 994 shorten.

Thus the intervention removes about 31\% of the path length
on average while preserving escape to the same boundary point.
It is a conditional intervention on an observed trace, not a
new independent sample of the Lorentz environment conditioned
on escape. Its purpose is to quantify how much geometric trace
can be separated into closed components without affecting
the source-to-exit connection. In particular, the observed
repeated vertices cannot be interpreted as independent
opportunities to destroy escape by local state changes.

\section{Discussion}\label{sec:discussion}

The cylinder data show a nearly linear penetration scale over
the measured widths, together with density-dependent finite-size
drift. The most directly usable output is the table of evaluated
one-slab confidence limits: these imply distance-uniform
confinement envelopes at the tested widths without a fitted
tail model. The power slopes are descriptive summaries of
finite data and do not replace that probability statement.

For the plane, large transverse range and substantial repeated
visitation coexist with sizable escape probabilities. The
trace-rank identity makes the revisit statistic geometrically
meaningful, while the intervention shows why high cycle rank
need not translate into independent closing mechanisms.
Endpoint experiments likewise distinguish an actual crossing
event from a selected witness for it. Improving the endpoint
distribution estimate alone leaves a much larger selection
loss in the measured examples.

Three limitations govern interpretation. The confidence
envelopes concern only the finite list of sampled widths.
The planar radii do not distinguish logarithmic decay from
other very slow asymptotics. Finally, collision and resampling
statistics depend on their stated selection and conditioning
rules. These limitations are part of the numerical conclusions,
not assumptions to be supplied by an unproved localization
argument. Larger widths, distributions of normalized excursion
shapes, and several simultaneous crossing witnesses are natural
next measurements, but no conclusion here depends on their
outcome.

\section*{Data and code availability}
The accompanying reproduction archive contains the complete
per-environment records, summary tables, plotting and analysis
code, the independent graph verifier, and the LaTeX source.
All reported numbers and figures are generated from those
records. The archive also records the seeds, commands,
software versions, and verification results.

\section*{Acknowledgement}
OpenAI's GPT was used in preparing the manuscript, developing
and checking simulation and analysis code, and organizing the
supporting mathematical arguments.

\appendix
\section{Supporting routing arguments}\label{app:proofs}

\subsection{Boundary pairings and independent slabs}\label{app:slab}

\begin{lemma}\label{lem:boundary}
In a finite domain with fixed local pairings, every incoming
boundary port is connected to exactly one other boundary
port. Boundary ports are paired perfectly. A path between
ports traverses each physical edge at most once.
\end{lemma}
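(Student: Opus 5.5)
The plan is to encode the fixed configuration as a graph on half-edges and read off all three claims from its degree structure.

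\emph{Setup.} Take the finite domain \(\Lambda\) with its fixed local pairings. Form a graph \(\Gamma\) whose vertices are the half-edges: the four half-edges at each site of \(\Lambda\), together with one external half-edge for each boundary port. Add two kinds of edges to \(\Gamma\):
\begin{enumerate}
\item[(i)] a \emph{physical} edge joining the two half-edges of each edge of the lattice, where for a boundary edge one end is the external port half-edge;
\item[(ii)] a \emph{routing} edge joining the two half-edges paired by the local state at each site, as in \eqref{eq:states}.
\end{enumerate}

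\emph{Degree count.} Every half-edge at a site lies in exactly one physical edge and exactly one routing edge, so it has degree two. Every port half-edge lies in exactly one physical edge and in no routing edge, so it has degree one. The connected components of a finite graph with all degrees at most two are simple paths and cycles. A component that contains a degree-one vertex must be a path, and that path has exactly two degree-one vertices as its endpoints. Hence each port is the endpoint of exactly one path, whose other endpoint is a different port.

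\emph{Perfect pairing.} The map sending a port to the other endpoint of its path is therefore a fixed-point-free involution on the set of ports. This is the perfect pairing. A port is not matched to itself because the two endpoints of a path are distinct vertices.

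\emph{Matching the ray dynamics.} I would check that following the ray rule from an incoming port is exactly walking along this path: crossing an edge uses the physical edge, and turning or passing straight through a site uses the routing edge. The ray is deterministic and reversible, since each half-edge has a unique physical partner and a unique routing partner. Consequently it cannot enter a cycle partway through, and the walk is the path component itself.

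\emph{Each physical edge at most once.} A simple path visits each vertex of \(\Gamma\) once. So each physical edge of \(\Gamma\) appears at most once in the path and is traversed in a single direction. This gives the last claim of the lemma, and it rules out the path using the same lattice edge in both directions.

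\emph{Main obstacle.} The step needing care is the boundary convention. One must make sure the ports correspond exactly to the degree-one vertices, and that on the cylinder \(S_{W,L}\) the periodic identification creates no extra endpoints. For the cylinder, the transverse edges wrap around and join two site half-edges, so they contribute no ports. Finiteness of \(\Lambda\) is what guarantees that the component is finite, so the path actually terminates at a port.
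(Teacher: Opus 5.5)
Your proof is correct and follows essentially the same route as the paper. Both arguments build an auxiliary graph in which interior vertices have degree two and boundary ports have degree one, and both read off all three claims from the decomposition of its components into paths and cycles. The difference is cosmetic: you take half-edges as vertices, joined by separate physical and routing edges, which is a subdivision of the paper's graph whose vertices are physical edges and whose edges are the local pairings.
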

\begin{proof}
Construct a graph whose vertices are physical edges, including
edges cut at the domain boundary. For each local pair of
half-edges, join the corresponding edge-vertices. Internal
edge-vertices have degree two and boundary edge-vertices
have degree one. A component containing a boundary vertex
is a finite path, with exactly two degree-one endpoints;
the other components are cycles. This proves the assertions.
\end{proof}

For a cylinder with \(W\) left ports, the number of transmitting
ports is congruent to \(W\) modulo two: all remaining left
ports are paired among themselves. This explains both the
odd-width control and the use of even widths in the
confinement experiment.

\begin{proposition}\label{prop:slab}
For positive integers \(W,L,k\), the cylinder events defined
in the text satisfy \eqref{eq:envelope}.
\end{proposition}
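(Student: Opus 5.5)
Write \(\Omega_W=[-kL,kL]\times(\Z/W\Z)\) and split \(B_W(kL)^c\) by the side on which a trajectory first leaves. If some trajectory through the section \(x=0\) leaves \([-kL,kL]\times(\Z/W\Z)\), then it reaches either \(x=kL+1\) or \(x=-kL-1\). A union bound over these two sides produces the factor \(2\). Reflection symmetry of the law \eqref{eq:states} together with translation invariance then reduces the task to a single bound: the probability that some trajectory through \(x=0\) reaches the section beyond \(x=kL\) is at most \(A_W(L)^k\).

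Partition \(\{1,\ldots,kL\}\times(\Z/W\Z)\) into the \(k\) consecutive slabs \(T_j=\{(j-1)L+1,\ldots,jL\}\times(\Z/W\Z)\), each a translate of \(S_{W,L}\). The states in different slabs are independent. The key deterministic step is an inclusion. Take a trajectory that crosses from \(x=0\) to beyond \(x=kL\), and for each \(j\) cut at its first entry into \(T_j\) from the left and at its first exit from \(T_j\) on the right. This gives a segment lying entirely inside \(T_j\) that joins a left boundary port of \(T_j\) to a right boundary port, using only the local pairings inside \(T_j\). This is the same ``cutting at the first crossing'' used for \eqref{eq:H}, and it works because the dynamics inside \(T_j\) read only the states in \(T_j\). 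Lemma~\ref{lem:boundary}, applied to the slab \(T_j\) viewed as a finite domain with its own cut boundary edges, shows that this segment is the boundary-to-boundary path of \(T_j\) starting from that left port. Hence the event \(\{\text{some left port of }T_j\text{ is paired to a right port in }T_j\}\), which depends only on \(T_j\) and has probability \(A_W(L)\), occurs for every \(j\).

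The events for different \(j\) are independent, so the probability that all of them occur equals \(A_W(L)^k\). Therefore
\[
\Pp(\text{some trajectory through }x=0\text{ passes beyond }x=kL)\le A_W(L)^k .
\]
Adding the mirror-image bound for the left side gives \eqref{eq:envelope}.

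The step I expect to need the most care is the inclusion. The first segment of the trajectory inside \(T_j\), from a left entry to a right exit, might leave \(T_j\) back through its left face before it finally crosses. To handle this, I would take the last entry through the left face of \(T_j\) before the first arrival at the right face of \(T_j\). The piece between those two times stays in \(T_j\): it cannot exit right earlier by the choice of the first arrival, and it cannot exit left by the choice of the last entry. Periodicity in the transverse direction causes no trouble because the slab is itself a cylinder. One should also check the convention for the section \(x=0\) and for incoming directions; a trajectory passing through \(x=0\) heading left is covered by the symmetric bound. The other subtlety is that the same physical edge is shared between adjacent slabs. Treating it as a cut port of each slab is consistent with Lemma~\ref{lem:boundary}.
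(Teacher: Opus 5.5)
Your proof is correct and follows the paper's argument: a union bound over the two sides, using reflection invariance of \eqref{eq:states}, and in each of the \(k\) disjoint slabs the segment from the last left entry preceding the first right exit, which is a slab crossing depending only on that slab's states, so independence gives \(A_W(L)^k\). Only the last-entry cut from your final paragraph makes the inclusion valid, not the first-entry cut of your second paragraph; also, the direction of travel at \(x=0\) is irrelevant, since the same cut applies to the portion of the trajectory joining \(x=0\) to \(x=kL+1\) in either time direction.
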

\begin{proof}
If a trajectory through \(x=0\) leaves through
\(x=kL+1\), it traverses each of the disjoint slabs
\(\{jL+1,\ldots,(j+1)L\}\times(\Z/W\Z)\),
\(j=0,\ldots,k-1\). For each slab, take the last left
entry preceding its first right exit; the intervening
segment is a slab crossing. Each crossing event depends
only on the states in that slab. Their independence
bounds the probability by \(A_W(L)^k\).
This implication already includes all initial ports and
directions, so no additional factor in \(W\) is needed.
Reflection gives the same bound for leaving on the left.
A union bound proves the claim.
\end{proof}

For completeness, if a binomial count is \(k_0\) out of
\(N\) trials, its one-sided upper confidence limit at
error \(\eta\) is
\[
 u=\operatorname{Beta}^{-1}(1-\eta;k_0+1,N-k_0)
\]
when \(k_0<N\), and \(u=1\) otherwise. With
\(\eta=0.01/17\), the union bound gives the coverage
used in Section~\ref{sec:cyl}. It does not require
independence between the 17 estimates.

\subsection{Endpoint-collision gluing}\label{app:collision}

\begin{proposition}
For the symmetric Lorentz law and the fixed selector of
Section~\ref{sec:endpoints}, \eqref{eq:glue-main} holds.
\end{proposition}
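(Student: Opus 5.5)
We prove the two inequalities in \eqref{eq:glue-main} separately. The right inequality is Cauchy--Schwarz. The left one comes from realizing the \(m\times 2h\) rectangle as two independent copies of \(Q_{m,h}\), the upper one reflected.

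\emph{Right inequality.} The labels \((i,j)\) range over at most \(m^2\) pairs, and \(\sum_{i,j}r_{ij}=q\) because a label is selected exactly on \(\Cross(m,h)\). Cauchy--Schwarz then gives \(q^2\le m^2\sum r_{ij}^2=m^2S\).

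\emph{Gluing construction.} Identify \(Q_{m,2h}\) with two stacked copies of \(Q_{m,h}\): the lower rows \(1,\ldots,h\) and the upper rows \(h+1,\ldots,2h\). The states in the two blocks are independent.

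Let \(\sigma\) be the reflection \(y\mapsto 2h+1-y\). Define the upper block's environment as the image under \(\sigma\) of an environment \(\omega'\) on \(Q_{m,h}\). Under a horizontal-axis reflection the pairing \((N,W)(S,E)\) becomes \((S,W)(N,E)\), so the two mirror states are exchanged while the vacancy is fixed. Because the two mirrors have equal weight \(p/2\), this map preserves the law \eqref{eq:states}. Hence \(\omega'\) is an independent copy of the lower environment \(\omega\).

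In \(\omega'\), the reflected crossing connects the top of the upper block (the original bottom) to its bottom (the original top).

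\emph{Left inequality.} Suppose the selected labels agree: \(\omega\) has selected label \((i,j)\) and \(\omega'\) has the same label \((i,j)\). Then in \(\omega\) the bottom port of column \(i\) connects to the top port of column \(j\) of the lower block. In the reflected upper block, the port at column \(j\) on its bottom side connects to column \(i\) on the top side of \(Q_{m,2h}\).

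The top port of the lower block and the bottom port of the upper block at column \(j\) are the same physical edge between rows \(h\) and \(h+1\). The two paths therefore concatenate. Ray dynamics is local: a path inside one block depends only on that block's states, and traversal through a shared edge continues into the other block. So the concatenation is a genuine trajectory in \(Q_{m,2h}\) from bottom column \(i\) to top column \(i\). This is the event \(\Col(m,2h)\).

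By independence and equality in law,
\[
\Pp(\Col(m,2h))\ge\sum_{i,j}\Pp(\text{lab}(\omega)=(i,j))\,\Pp(\text{lab}(\omega')=(i,j))=\sum r_{ij}^2=S.
\]
The events for different \((i,j)\) are disjoint because the label is unique, so the sum is legitimate.

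\emph{Main obstacle.} The delicate step is checking that the concatenation is a single trajectory of the full rectangle's pairing system. Each half is a boundary-to-boundary path in its own block by Lemma~\ref{lem:boundary}. We must verify that opening the interface does not alter these paths: an interior trajectory only follows local pairings, so it does. The shared cut edge joins the two halves, and Lemma~\ref{lem:boundary} in \(Q_{m,2h}\) then identifies the two endpoints as paired ports.

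Reflection invariance of the law is the other point that needs care. It is exactly where the symmetric weights \(p/2,p/2\) are used.
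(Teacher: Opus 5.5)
Your proof is correct and follows essentially the same approach as the paper: Cauchy--Schwarz over the $m^2$ labels for $S\ge q^2/m^2$, and gluing $\omega$ below a reflected independent copy, with the selector applied before reflection. In that gluing, the equal mirror weights make the reflected configuration have the product law, and disjointness of the matching-label events gives $\Pp(\Col(m,2h))\ge S$. The only difference is cosmetic: you pull back the upper half of a product-law configuration, whereas the paper pushes forward two independent copies, and you spell out the concatenation step in more detail.
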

\begin{proof}
Take independent configurations \(\omega,\eta\) on
\(Q_{m,h}\). Place \(\omega\) below a reflected copy
of \(\eta\), interchanging the two mirror orientations
under reflection. The resulting \(m\times2h\)
configuration has the original product law. If the two
latent selected labels both equal \((i,j)\), the lower
path goes from \(i\) to \(j\), and the reflected upper
path goes from \(j\) to \(i\). They concatenate along
their shared interface edge to give a same-column crossing.
The matching-label events are disjoint and have total
probability \(\sum r_{ij}^2=S\). Since
\(\sum r_{ij}=q\), Cauchy--Schwarz gives
\(S\ge q^2/m^2\). The selector is applied before
reflection; its equivariance is not assumed.
\end{proof}

For distinct sampled environments \(a,b\), let the
collision indicator be one only when both have the same
successful label. Its expectation is \(S\). Averaging
over the \(N(N-1)\) ordered pairs gives
\eqref{eq:collision-estimator} and proves unbiasedness.
If \(c_z\) is the count of a successful label \(z\),
the centered delete-one pseudo-value for an observation
with that label is
\[
 I_a=\frac{2\{c_z-1-(N-1)\widehat S\}}{N-2}.
\]
For an unsuccessful observation replace \(c_z-1\) by
zero. The standard deviation of these pseudo-values
divided by \(\sqrt N\) is the reported jackknife
standard error. The gain interval uses the joint
delta-method pseudo-value
\[
 \frac{m^2}{\widehat q^2}I_a
 -\frac{2m^2\widehat S}{\widehat q^3}
   (\mathbf1_{\{a\text{ crosses}\}}-\widehat q).
\]
These approximate intervals measure sampling uncertainty
and are not certified deterministic bounds on \(S\).

\subsection{A conservative repeated-arrival estimate}\label{app:fresh}

\begin{proposition}\label{prop:fresh}
For \(0<p\le1\) and \(R\ge2\),
\begin{equation}\label{eq:repeats-bound}
 \Pp(E_R,K_R<p^3R/96)\le 2e^{-p^3R/32}.
\end{equation}
\end{proposition}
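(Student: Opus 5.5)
The plan is a renewal-type argument. Each time the trajectory sets a new record of its distance from the origin, the sites just beyond the explored region are still unqueried. With conditional probability at least \(p^3/8\), a fixed three-mirror pattern on those fresh sites then forces a repeated arrival within a bounded number of steps. Summing these opportunities and applying a Chernoff (or Azuma) bound to a sequence dominating independent Bernoulli\((p^3/8)\) variables should give \eqref{eq:repeats-bound}. The factor \(2\) would come from splitting into the two coordinate directions, or from a two-sided tail, depending on how the record times are organized.

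\textbf{Step 1 (record times).} On \(E_R\), the walk must push \(\|X_t\|_\infty\) from \(0\) to \(R\). I would define stopping times \(T_1<T_2<\cdots\) as the first times \(\|X_t\|_\infty\) reaches the levels \(3,6,9,\ldots\). There are at least \(\lfloor R/3\rfloor-1\) of them before \(\tau_R\), and the spacing keeps the regions examined at different opportunities disjoint. At time \(T_j\), let \(e\) be the outward unit vector in the record coordinate and \(f\) a perpendicular unit vector. Every site whose \(e\)-coordinate exceeds that of \(X_{T_j}\), in particular \(x+e\) and \(x+e+f\) with \(x=X_{T_j}\), is unqueried by the history \(\mathcal F_{T_j}\).

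\textbf{Step 2 (local loop pattern).} At the record time the walk moves from \(x\) to \(x+e\); since \(\|x+e\|_\infty\) is a new record in the \(e\)-coordinate, I would use the step that produces the record itself. Require the mirror at \(x+e\) that turns the ray toward \(f\), and the mirror at \(x+e+f\) that turns it toward \(-e\); this has probability \((p/2)^2\) given \(\mathcal F_{T_j}\). The ray then arrives at \(x+f\).
\begin{itemize}
\item If \(x+f\) was visited before, this arrival is already a repeated arrival, so \(K\) increases.
\item Otherwise \(x+f\) is fresh, and the mirror turning the ray toward \(-f\) (probability \(p/2\)) sends it back to \(x\), again a repeated arrival.
\end{itemize}
Either way the conditional success probability is at least \(p^3/8\). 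Because levels are spaced by \(3\), the sites used at opportunity \(j\) are not revealed before \(T_{j+1}\) except through opportunity \(j\) itself. I also need to check that the exit time \(\tau_R\) cannot intervene: the pattern stays within distance \(1\) of a record below \(R-1\), which is why the last level is discarded.

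\textbf{Step 3 (concentration).} The success indicators \(\xi_j\) satisfy \(\mathbb E[\xi_j\mid\mathcal F_{T_j}]\ge p^3/8\). By the standard coupling, \(\sum_j\xi_j\) dominates a Binomial\((\lfloor R/3\rfloor-1,p^3/8)\) variable, and \(K_R\ge\sum_j\xi_j\) on \(E_R\) since distinct successes give distinct repeated arrivals. A multiplicative Chernoff lower-tail bound at a constant fraction of the mean then yields an exponent linear in \(p^3R\). For \(R\) small relative to \(p^{-3}\), the right side of \eqref{eq:repeats-bound} exceeds \(1\) and there is nothing to prove.

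\textbf{Main obstacle.} I expect the hardest step to be the constant bookkeeping that produces exactly \(p^3/96\) and \(p^3/32\), together with the freshness claim in Step 2. I would first try Hoeffding/Azuma applied to the martingale \(\sum_j(\xi_j-\mathbb E[\xi_j\mid\mathcal F_{T_j}])\) rather than Chernoff, and adjust the level spacing, possibly using \(4\) instead of \(3\), to match the stated coefficients. I would also handle records attained along a diagonal by choosing a single coordinate that attains the maximum.
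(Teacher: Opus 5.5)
Your record-time scheme is sound in outline. Freshness at records, distinct repeated arrivals, the exit not intervening, and adaptedness all check out. The gap is in Step~3: with the parameters you fix it cannot produce the stated constants, and the remedies you suggest make things worse.

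\textbf{Why the constants fail.} You have one forced loop of conditional probability $\pi=p^3/8$ at each of about $R/3$ levels. The dominating binomial then has mean about $p^3R/24$, and the threshold $p^3R/96$ is a quarter of that mean. The lower tail of $\mathrm{Bin}(n,\pi)$ at $n\pi/4$ is $e^{-nD(\pi/4\,\|\,\pi)+O(\log n)}$. Here $D(\pi/4\,\|\,\pi)\approx(\tfrac34-\tfrac14\ln4)\pi\approx0.40\,\pi$ for small $p$, and $\approx0.44\,\pi$ at $p=1$. So the best exponent your domination can give is about $0.017$ to $0.018\,p^3R$, against the required $p^3R/32\approx0.031\,p^3R$. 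For large $R$ the binomial probability itself exceeds $2e^{-p^3R/32}$, so no concentration inequality closes the chain. Azuma or Hoeffding is worse: a deviation of about $p^3R/32$ over about $R/3$ bounded increments gives an exponent of order $p^6R$, the wrong power of $p$. Spacing $4$ only removes trials. Since the content of the proposition is exactly these constants, this is a real gap rather than bookkeeping.

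\textbf{The missing idea.} Both mirror orientations at the record site close a loop; this is the paper's ``three left turns or three right turns''.

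First fix the indexing. Take $x=X_{T_j-1}$ rather than $X_{T_j}$, so the ray arrives at the fresh site $x+e=X_{T_j}$ heading in direction $e$. From $X_{T_j}$ itself the ray need not move to $X_{T_j}+e$, so Step~2 does not work with Step~1's choice of $x$.

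The record step moves one coordinate from $\pm(3j-1)$ to $\pm3j$, so the other coordinate of $x+e$ is at most $3j-1$ in absolute value. Hence:
\begin{itemize}
\item $x+e\pm f$ both lie at norm exactly $3j$ and are unvisited;
\item the whole pattern stays at norm at most $3j$, so you only need $3j\le R-1$.
\end{itemize}

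Now count the success probability:
\begin{itemize}
\item either mirror at $x+e$ (probability $p$) sends the ray to $x+e+f$ or $x+e-f$;
\item one specific mirror there (probability $p/2$) turns it toward $-e$;
\item at $x\pm f$, either the arrival is already a repeat, or one more specific mirror (probability $p/2$) returns the ray to $x$.
\end{itemize}
Thus $\Pp(\xi_j=1\mid\mathcal F_{T_j})\ge p^3/4=\delta$.

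With $n=\lfloor(R-1)/3\rfloor\ge(R-3)/3$, iterated conditioning and Markov's inequality at $\lambda=\ln2$ give
\[
 \Pp\bigl(E_R,K_R<p^3R/96\bigr)\le e^{p^3R\ln2/96}(1-\delta/2)^n
 \le e^{p^3/8}\exp\Bigl(-\frac{4-\ln2}{96}\,p^3R\Bigr)\le2e^{-p^3R/32}.
\]

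Alternatively, keep $p^3/8$ but use every level. The domination needs only that $\xi_j$ be $\mathcal F_{T_{j+1}}$-measurable, not that regions be disjoint. That holds because the ray stays at norm at most $j$ while the pattern is alive. The factor $2$ is slack for the discarded levels, not a union over directions.

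\textbf{Comparison with the paper.} The paper reaches the same exponent $(4-\ln2)p^3/96$ without record times. It cuts the arrivals after $X_0$ into slots of consecutive new vertices, each ending at a repeated arrival. Applying the $p^3/4$ three-turn loop from every new vertex, it dominates each slot by three times a geometric variable. It then uses that the $R$ new vertices needed to exit must fit into at most $K_R+1$ slots. That route needs no geometric freshness, since hitting an old site just ends the slot. Your version counts forced loops directly and is an equally valid dual once the per-level success probability is doubled.
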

\begin{proof}
Reveal a state only when its outgoing action is needed.
Exclude \(X_0\), and partition subsequent arrivals
into slots of consecutive new vertices, each ending at
the next repeated arrival or termination. Empty slots
are allowed. After termination all slot lengths are
zero. Write \(\ell_j\) for the lengths and
\(S_j=\sum_{i=1}^j\ell_i\).

Conditional on a slot's history, three consecutive
left turns, or three consecutive right turns, force
a return to the predecessor of the first new vertex,
unless the slot ends even earlier. Complete a stopped
three-action word with auxiliary independent actions
to formalize the comparison. Each word has probability
\((p/2)^3\), and they are disjoint. With
\(\delta=p^3/4\), iteration gives
\[
 \Pp(\ell_j>3r\mid\text{history at its start})
 \le(1-\delta)^r.
\]
Thus each slot is conditionally dominated by three
times a geometric random variable of success parameter
\(\delta\). At
\(\theta=\tfrac13\log(1/(1-\delta/2))\), its
conditional exponential moment is at most two.
Repeated conditioning yields \(\Ep e^{\theta S_j}\le2^j\).

Reaching radius \(R\) requires at least \(R\) new
vertices after \(X_0\). On \(E_R\), these occur
in at most \(K_R+1\) slots. If
\(K_R<aR\), where \(a=p^3/96=\delta/24\),
then \(S_m\ge R\) for \(m=\lceil aR\rceil\).
Markov's inequality gives
\[
 \Pp(E_R,K_R<aR)\le2^m e^{-\theta R}
 \le2e^{-(\theta-a\log2)R}
 \le2e^{-\delta R/8},
\]
using \(\theta\ge\delta/6\). This is
\eqref{eq:repeats-bound}. Slot independence is never
assumed.
\end{proof}

\subsection{Trace rank and resampling}\label{app:trace}

\begin{proposition}\label{prop:trace}
On \(E_R\), the cut trace \(G_R\) has two leaves,
all remaining degrees are two or four, and there are
exactly \(K_R\) degree-four vertices. Its cycle rank
and its number of bounded faces both equal \(K_R\).
\end{proposition}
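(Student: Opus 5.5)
I will set up \(G_R\) as the planar graph whose vertex set is the set of lattice vertices \(\{X_0,\ldots,X_{\tau_R}\}\) visited before exit, together with two extra leaf vertices. The first leaf is the cut midpoint of the incoming edge \(((-1,0),(0,0))\). The second is the first point of the exit edge beyond \(X_{\tau_R}\); equivalently, I cut the outgoing edge at the exit. The edges of \(G_R\) are the physical lattice edges traversed by the trajectory up to time \(\tau_R\), including the two cut half-edges.

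\textbf{Step 1: degrees.} By Lemma~\ref{lem:boundary}, applied to the finite box containing the trace, the path traverses each physical edge at most once. Hence each traversal of a lattice vertex uses two distinct half-edges, namely the incoming one and the outgoing one given by the local pairing. A vertex visited once therefore has degree two. A vertex visited twice uses two different local pairs, hence all four half-edges, and has degree four. A third visit is impossible, since only two pairs exist at a vertex and a pair cannot be reused without retraversing an edge. The two cut half-edge endpoints are leaves. One point needs care: \(X_{\tau_R}\) is counted in \(N_R\), and its outgoing edge must be counted as part of the trace. I will check that this convention is exactly the one making the source and exit the only degree-one vertices.

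\textbf{Step 2: counting.} The number of arrivals at lattice vertices is \(\tau_R+1\), and \(N_R\) of them are first visits. So \(K_R=\tau_R+1-N_R\) is the number of repeated arrivals, which by Step~1 equals the number of degree-four vertices. \(G_R\) is connected because it is the image of a single path. Write \(V=N_R+2\). By handshaking,
\[
 2|E|=2+2(N_R-K_R)+4K_R,
\]
so \(|E|=N_R+K_R+1\). The cycle rank is then
\[
 \beta_1=|E|-V+1=K_R.
\]
As a cross-check, the number of edges should also equal \(\tau_R+2\), one edge per step plus the two cut pieces, and this agrees with the formula above.

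\textbf{Step 3: faces.} \(G_R\) is embedded in the plane with lattice edges, and the cut half-edges are straight segments that do not cross anything. Euler's formula for a connected plane graph gives
\[
 F=|E|-V+2,
\]
so the number of bounded faces is \(F-1=\beta_1=K_R\).

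\textbf{Main obstacle.} The delicate step is the degree claim, in particular ruling out a third visit and handling the endpoint and origin conventions. I will rely on the edge-traversal-at-most-once statement of Lemma~\ref{lem:boundary}, together with the fact that the origin's incoming edge is the source half-edge, so that \(X_0\) gets no spurious extra degree.
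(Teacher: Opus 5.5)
Your proof is correct and follows essentially the same route as the paper. On $E_R$ no physical edge is repeated; the paper phrases this as a repeat closing the source's component in the degree-two pairing graph, and your use of Lemma~\ref{lem:boundary} with the cut source half-edge as a port amounts to the same thing. So each visit consumes two fresh half-edges, giving at most two visits, degrees two or four, and exactly $K_R$ vertices of degree four, after which $\beta_1=|E|-|V|+1=K_R$ and Euler's formula finish the proof.

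The only discrepancy is your exit convention. The paper's $G_R$ contains only the edges traversed up to time $\tau_R$, so $X_{\tau_R}$ is itself the second leaf and the counts are $|V|=N_R+1$, $|E|=\tau_R+1$. The outgoing half-edge you append at $X_{\tau_R}$ is therefore not required, contrary to your remark that it must be counted. It is, however, a pendant edge, and deleting it leaves every conclusion unchanged.
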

\begin{proof}
Before reaching a new radius, the ray cannot repeat
a physical edge: in the degree-two pairing graph
this would close its component. Each nonterminal
visit uses two previously unused incident half-edges,
so a physical vertex is visited at most twice.
The source half-edge accounts for the initial arrival;
the exit is new and has degree one. This proves the
degree assertions, including the count of degree-four
vertices.

There are \(N_R+1\) graph vertices after adding the
source midpoint, and \(\tau_R+1=N_R+K_R\) graph
edges after adding its half-edge. Connectivity gives
\(\beta_1=|E|-|V|+1=K_R\). Euler's formula in the
physical plane gives the bounded-face count.
\end{proof}

\begin{proposition}\label{app:resampling}
Fix an escaping cut trace. Arbitrary simultaneous
changes of local states at its degree-four vertices,
with all other states fixed, preserve its source-to-exit
connection. The new path uses only edges of the old
trace and has no greater length.
\end{proposition}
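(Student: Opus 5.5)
The plan is to rerun the pairing-graph argument of Lemma~\ref{lem:boundary}, restricted to the physical edges of the cut trace \(G_R\). I will show that, after the modification, the half-edges of the trace are still paired only among themselves at every trace vertex. The trace then becomes a closed finite system whose only degree-one objects are the source half-edge and the exit edge, so the source component must end at the exit.

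\textbf{Step 1 (auxiliary graph).} Build the auxiliary graph \(\Gamma\) whose vertices are the physical edges of \(G_R\), including the source half-edge and the final edge entering the exit vertex. For each trace vertex \(v\) other than the exit, join two edge-vertices whenever the current local state at \(v\) pairs the corresponding half-edges.

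\textbf{Step 2 (closure at every vertex).} I check that each pairing at \(v\) involving a trace half-edge pairs it with another trace half-edge. Proposition~\ref{prop:trace} gives three cases.
\begin{itemize}
\item At a degree-four vertex, all four half-edges lie in the trace. Any new state is a perfect matching of them, so closure holds trivially.
\item At a degree-two vertex, the state is unchanged, and the original ray entered and left through the two trace half-edges. The original state therefore pairs exactly these two, and it still does.
\item The exit vertex has degree one, and its state is never consulted before \(\tau_R\). The exit edge is a leaf of \(\Gamma\), as is the source half-edge.
\end{itemize}
Hence \(\Gamma\) has exactly two vertices of degree one, the source and exit edges, and every other vertex has degree two.

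\textbf{Step 3 (connection and length).} As in the proof of Lemma~\ref{lem:boundary}, every component of \(\Gamma\) is either a path between the two leaves or a cycle. So the component of the source half-edge is a path ending at the exit edge. This path is exactly the new trajectory, since the ray follows the local pairings and every pairing it meets is one of those just described. Consequently the trajectory stays inside the trace and reaches the same exit vertex. It never meets a state outside the trace, so the changes elsewhere, or rather their absence, are irrelevant. The new trajectory traverses each physical edge of \(G_R\) at most once, so its length is at most the number of edges of \(G_R\). That number is \(\tau_R+1\), the original length in the convention of Proposition~\ref{prop:trace}.

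\textbf{Expected main obstacle.} The delicate part is the bookkeeping at the endpoints in Step 2. One must confirm that the exit vertex is a first visit, so its state plays no role, and that the cut at the source midpoint really makes the source a leaf. Otherwise a modified state at \(X_0\), which may itself have degree four, could appear to route the ray back out through the cut edge. The cut convention handles this: the source half-edge is simply a leaf of \(\Gamma\), and the other half-edges at \(X_0\) are trace half-edges paired among themselves by whatever new state is chosen.
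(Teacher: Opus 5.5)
Your proposal is correct and follows essentially the same argument as the paper. At degree-two vertices the unchanged state still pairs the two trace half-edges, and at degree-four vertices any new state pairs the four trace half-edges among themselves; the finite pairing argument then forces the source component to end at the exit, and since no edge repeats, the new length is at most \(\tau_R+1\). The one point you leave implicit, which the paper states, is that every trace vertex other than the exit satisfies \(\|x\|_\infty<R\), so the exit vertex reached is also the new trajectory's first exit.
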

\begin{proof}
At a degree-two vertex the unchanged state pairs its
two trace half-edges. At a degree-four vertex every
allowed new state pairs its four trace half-edges
among themselves. The source and exit are the only
unmatched boundary ports. By the finite pairing
argument, the source component must end at the exit;
all other components are cycles. It follows the
actual modified states and uses only the old trace
edges. No interior vertex lies on the stopping
boundary, so this is also the new first exit.
Since no edge is repeated, its length cannot exceed
the original path length.
\end{proof}


\begin{thebibliography}{99}
\bibitem{AALR}
E.~Abrahams, P.~W.~Anderson, D.~C.~Licciardello, and T.~V.~Ramakrishnan,
\emph{Scaling theory of localization: Absence of quantum diffusion in
two dimensions},
Phys. Rev. Lett. \textbf{42} (1979), no.~10, 673--676.
\url{https://doi.org/10.1103/PhysRevLett.42.673}.

\bibitem{Anderson}
P.~W.~Anderson,
\emph{Absence of diffusion in certain random lattices},
Phys. Rev. \textbf{109} (1958), no.~5, 1492--1505.
\url{https://doi.org/10.1103/PhysRev.109.1492}.

\bibitem{BeamondCardyChalker}
E.~J.~Beamond, J.~Cardy, and J.~T.~Chalker,
\emph{Quantum and classical localization, the spin quantum Hall effect,
and generalizations},
Phys. Rev. B \textbf{65} (2002), no.~21, 214301.
\url{https://doi.org/10.1103/PhysRevB.65.214301}.

\bibitem{BeamondOwczarekCardy}
E.~J.~Beamond, A.~L.~Owczarek, and J.~Cardy,
\emph{Quantum and classical localization and the Manhattan lattice},
J. Phys. A: Math. Gen. \textbf{36} (2003), no.~41, 10251--10267.
\url{https://doi.org/10.1088/0305-4470/36/41/001}.

\bibitem{BunimovichTroubetzkoy}
L.~A.~Bunimovich and S.~E.~Troubetzkoy,
\emph{Recurrence properties of Lorentz lattice gas cellular automata},
J. Stat. Phys. \textbf{67} (1992), nos.~1--2, 289--302.
\url{https://doi.org/10.1007/BF01049035}.

\bibitem{DingSmart}
J.~Ding and C.~K.~Smart,
\emph{Localization near the edge for the Anderson Bernoulli model on
the two dimensional lattice},
Invent. Math. \textbf{219} (2020), no.~2, 467--506.
\url{https://doi.org/10.1007/s00222-019-00910-4}.

\bibitem{JacobsenReadSaleur}
J.~L.~Jacobsen, N.~Read, and H.~Saleur,
\emph{Dense loops, supersymmetry, and Goldstone phases in two dimensions},
Phys. Rev. Lett. \textbf{90} (2003), no.~9, 090601.
\url{https://doi.org/10.1103/PhysRevLett.90.090601}.

\bibitem{KS}
G.~Kozma and V.~Sidoravicius,
\emph{Lower bound for the escape probability in the Lorentz mirror
model on \(\mathbb Z^2\)},
Israel J. Math. \textbf{209} (2015), no.~2, 683--685.
\url{https://doi.org/10.1007/s11856-015-1233-1}.

\bibitem{KraemerSanders}
A.~S.~Kraemer and D.~P.~Sanders,
\emph{Zero density of open paths in the Lorentz mirror model for
arbitrary mirror probability},
J. Stat. Phys. \textbf{156} (2014), no.~5, 908--916.
\url{https://doi.org/10.1007/s10955-014-1038-3}.

\bibitem{LefevereTasaki}
R.~Lefevere and H.~Tasaki,
\emph{Hierarchical Lorentz mirror model: Normal transport and a universal
\(2/3\) mean--variance law},
arXiv:2602.07988v5, 2026.
\url{https://arxiv.org/abs/2602.07988v5}.

\bibitem{LiManhattan}
L.~Li,
\emph{On the Manhattan pinball problem},
Electron. Commun. Probab. \textbf{26} (2021), article 25, 1--11.
\url{https://doi.org/10.1214/21-ECP394}.

\bibitem{LiBernoulli2D}
L.~Li,
\emph{Anderson--Bernoulli localization at large disorder on the 2D lattice},
Commun. Math. Phys. \textbf{393} (2022), no.~1, 151--214.
\url{https://doi.org/10.1007/s00220-022-04366-1}.

\bibitem{LiCylinder}
L.~Li,
\emph{Polynomial bound for the localization length of the Lorentz
mirror model on the 1D cylinder},
Electron. Commun. Probab., to appear.
Version cited: arXiv:2010.05900v3, 2026.
\url{https://arxiv.org/abs/2010.05900v3}.

\bibitem{LiZhangBernoulli3D}
L.~Li and L.~Zhang,
\emph{Anderson--Bernoulli localization on the three-dimensional lattice
and discrete unique continuation principle},
Duke Math. J. \textbf{171} (2022), no.~2, 327--415.
\url{https://doi.org/10.1215/00127094-2021-0038}.

\bibitem{MartinsNienhuisRietman}
M.~J.~Martins, B.~Nienhuis, and R.~Rietman,
\emph{Intersecting loop model as a solvable super spin chain},
Phys. Rev. Lett. \textbf{81} (1998), no.~3, 504--507.
\url{https://doi.org/10.1103/PhysRevLett.81.504}.

\bibitem{NahumEtAl}
A.~Nahum, P.~Serna, A.~M.~Somoza, and M.~Ortu\~no,
\emph{Loop models with crossings},
Phys. Rev. B \textbf{87} (2013), no.~18, 184204.
\url{https://doi.org/10.1103/PhysRevB.87.184204}.

\bibitem{RuijgrokCohen}
Th.~W.~Ruijgrok and E.~G.~D.~Cohen,
\emph{Deterministic lattice gas models},
Phys. Lett. A \textbf{133} (1988), nos.~7--8, 415--418.
\url{https://doi.org/10.1016/0375-9601(88)90927-9}.

\bibitem{Ryan}
K.~Ryan,
\emph{The Manhattan and Lorentz mirror models: a result on the
cylinder with low density of mirrors},
J. Stat. Phys. \textbf{185} (2021), no.~2, article 7.
\url{https://doi.org/10.1007/s10955-021-02837-8}.

\bibitem{RyanXXZ}
K.~Ryan,
\emph{The spin-\(1/2\) Heisenberg XXZ chain and the Lorentz mirror model
with loop weight \(2\)},
arXiv:2608.21306v1, 2026.
\url{https://arxiv.org/abs/2608.21306v1}.

\bibitem{ZiffKongCohen}
R.~M.~Ziff, X.~P.~Kong, and E.~G.~D.~Cohen,
\emph{Lorentz lattice-gas and kinetic-walk model},
Phys. Rev. A \textbf{44} (1991), no.~4, 2410--2428.
\url{https://doi.org/10.1103/PhysRevA.44.2410}.
\end{thebibliography}
\end{document}